\documentclass[12pt, letterpaper]{article}
\usepackage{fullpage}
\usepackage{amsmath, amssymb,amsthm}
\usepackage{enumerate}
\usepackage[lined,boxed]{algorithm2e}
\usepackage{hyperref}

\newtheorem{theorem}{Theorem}[section]
\newtheorem{lemma}[theorem]{Lemma}

\newtheorem{corollary}[theorem]{Corollary}

\numberwithin{equation}{section}
 
\newenvironment{definition}[1][Definition]{\begin{trivlist}
\item[\hskip \labelsep {\bfseries #1}]}{\end{trivlist}}

\newenvironment{remark}[1][Remark]{\begin{trivlist}
\item[\hskip \labelsep {\bfseries #1}]}{\end{trivlist}}


\title{Robust Cooperation in the Prisoner's Dilemma: Program Equilibrium via Provability Logic}
\author{Mihaly Barasz, Paul Christiano, Benja Fallenstein,\\ Marcello Herreshoff, Patrick LaVictoire, and Eliezer Yudkowsky}
\begin{document}

\maketitle
\begin{abstract}
We consider the one-shot Prisoner's Dilemma between algorithms with read-access to one anothers' source codes, and we use the modal logic of provability to build agents that can achieve mutual cooperation in a manner that is \emph{robust}, in that cooperation does not require exact equality of the agents' source code, and \emph{unexploitable}, meaning that such an agent never cooperates when its opponent defects. We construct a general framework for such ``modal agents'', and study their properties.
\end{abstract}

\section{Introduction}

Can cooperation in a one-shot Prisoner's Dilemma be justified between rational agents? Rapoport \cite{Rapoport} argued in the 1960s that two agents with mutual knowledge of each others' rationality should be able to mutually cooperate. Howard \cite{Howard} explains the argument thus: \begin{quote} Nonetheless arguments have been made in favour of playing C even in a single play of the PD. The one that interests us relies heavily on the usual assumption that both players are completely rational and know everything there is to know about the situation. (So for instance, Row knows that Column is rational, and Column knows that he knows it, and so on.) It can then be argued by Row that Column is an individual very similar to himself and in the same situation as himself. Hence whatever he eventually decides to do, Column will necessarily do the same (just as two good students given the same sum to calculate will necessarily arrive at the same answer). Hence if Row chooses D, so will Column, and each will get 1. However if Row chooses C, so will Column, and each will then get 2. Hence Row should choose C.\end{quote} Hofstadter \cite{Metamagical} described this line of reasoning as ``superrationality'', and held that knowledge of similar cognitive aptitudes should be enough to establish it, though the latter contention is (to say the least) controversial within decision theory. However, one may consider a stronger assumption: what if each agent has some ability to predict in advance the actions of the other? 
\\
\\ This stronger assumption suggests a convenient logical formalism. In the 1980s, Binmore \cite{Binmore} considered game theory between Turing machines which had access to one anothers' G\"{o}del numbers\footnote{Binmore's analysis, however, eschews cooperation in the Prisoner's Dilemma as irrational!}: \begin{quote} ...a player needs to be able to cope with hypotheses about the reasoning processes of the opponents other than simply that which maintains that they are the same as his own. Any other view risks relegating rational players to the role of the ``unlucky'' Bridge expert who usually loses but explains that his play is ``correct'' and would have led to his winning if only the opponents had played ``correctly''. Crudely, rational behavior should include the capacity to exploit bad play by the opponents. \\ In any case, if Turing machines are used to model the players, it is possible to suppose that the play of a game is prefixed by an exchange of the players' G\"{o}del numbers.\end{quote}
Howard \cite{Howard} and McAfee \cite{McAfee} considered the Prisoner's Dilemma in this context\footnote{One can consider the usual one-shot strategies (always cooperate, always defect) as Turing machines that return the same output regardless of the given input; we denote these algorithms as CooperateBot and DefectBot in order to distinguish them from the outputs Cooperate and Defect.}, and each presented an example of an algorithm which would always return an answer, would cooperate if faced with itself, and would never cooperate when the opponent defected. (The solution discussed in both papers was a program that used quining of the source code to implement the algorithm ``cooperate if and only if the opponent's source code is identical to mine''; we represent it in this paper as Algorithm \ref{cliquebot}, which we call CliqueBot on account of the fact that it cooperates only with the `clique' of agents identical to itself.)
\\
\\ More recently, Tennenholtz \cite{Tennenholtz} reproduced this result in the context of other research on multi-agent systems, noting that CliqueBot can be seen as a Nash equilibrium of the expanded game where two players decide which code to submit to the Prisoner's Dilemma with source-code swap. This context (called ``program equilibrium'') led to several novel game-theoretic results, including folk theorems by Fortnow \cite{Fortnow} and Kalai, Kalai, Lehrer and Samet \cite{KKLS}, an answer by Monderer and Tennenholtz \cite{MT} to the problem of seeking strong equilibria (many-agent Prisoner's Dilemmas in which mutual cooperation can be established in a manner that is safe from \emph{coalitions} of defectors), a Bayesian framework by Peters and Szentes \cite{PS}, and more.
\\
\\ However, these approaches have an undesirable property: they restrict the circle of possible cooperators dramatically---in the most extreme case, only to agents that are syntactically identical! (In a moment, we will see that there are many examples of semantically distinct agents such that one would wish one's program to quickly cooperate with each of them.) Thus mutual cooperation is inherently brittle for CliqueBots, and an ecology of such agents would be akin to an all-out war between incompatible cliques.
\\
\\This problem can be patched somewhat, but not cured, by prescribing a list of agents with whom mutual cooperation is desirable, but this approach is inelegant and requires all of the relevant reasoning to happen outside of the system. We'd like to see agents that can reason on their own somewhat.
\\
\\A natural-seeming strategy involves simulating the other agent to see what they do when given one's own source code. Unfortunately, this leads to an infinite regress when two such agents are pitted against one another.
\\
\\One attempt to put mutual cooperation on more stable footing is the model-checking result of van der Hoek, Witteveen, and Wooldridge \cite{HWW}, which seeks ``fixed points'' of strategies that condition their actions on their opponents' output. However, in many interesting cases there are several fixed points, or none at all, and so this approach does not correspond to an algorithm as we would like.
\\
\\ Since the essence of this problem deals in counterfactuals---e.g. ``what would they do if I did this''---it is worth considering modal logic, which was intended to capture reasoning about counterfactuals, and in particular the G\"{o}del-L\"{o}b modal logic \textbf{GL} with provability as its modality. (See Boolos \cite{Boolos} and Lindstr\"{o}m \cite{Lindstrom} for some good references on \textbf{GL}.) That is, if we consider agents that cooperate if and only if they can prove certain logical formulas, the structure of logical provability gives us a genuine framework for counterfactual reasoning, and in particular a powerful and surprising tool known as L\"{o}b's Theorem \cite{Lob}:
\begin{theorem} [L\"{o}b's Theorem] \label{lobtheorem}
Let \textsf{S} be a formal system which includes Peano Arithmetic. If $\phi$ is any well-formed formula in \textsf{S}, let $\Box \phi$ be the formula in a G\"{o}del encoding of \textsf{S} which claims that there exists a proof of $\phi$ in \textsf{S}; then whenever $\textsf{S}\vdash (\Box \phi \to\phi)$, in fact $\textsf{S}\vdash \phi$.
\end{theorem}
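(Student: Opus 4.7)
The plan is to prove L\"ob's theorem by the standard ``modal'' argument, combining the Hilbert--Bernays--L\"ob derivability conditions for $\Box$ with a single application of the G\"odel--Carnap diagonal (fixed point) lemma. Throughout, I will freely use that $\textsf{S}$ extends Peano Arithmetic, which guarantees (i) \emph{necessitation}: if $\textsf{S} \vdash \alpha$ then $\textsf{S} \vdash \Box \alpha$, (ii) \emph{distribution}: $\textsf{S} \vdash \Box(\alpha \to \beta) \to (\Box \alpha \to \Box \beta)$, and (iii) \emph{internal necessitation}: $\textsf{S} \vdash \Box \alpha \to \Box \Box \alpha$. These are exactly the three facts that make $\textbf{GL}$ an appropriate modal logic for the provability predicate, and all the real work of the proof is compressed into them.

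First I would invoke the diagonal lemma to produce a sentence $\psi$ satisfying
\[
\textsf{S} \vdash \psi \leftrightarrow (\Box \psi \to \phi).
\]
This is where most of the foundational labor sits: one needs G\"odel numbering, the representability of the substitution function, and a self-referential construction so that $\psi$ effectively says ``if I am provable, then $\phi$''. I would not carry out this construction in detail, treating the diagonal lemma as a standard prerequisite.

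Next, I would chain the three derivability conditions. From the biconditional above and necessitation/distribution, $\textsf{S} \vdash \Box \psi \to \Box(\Box \psi \to \phi)$, and then another application of distribution gives $\textsf{S} \vdash \Box \psi \to (\Box \Box \psi \to \Box \phi)$. Combining this with internal necessitation $\textsf{S} \vdash \Box \psi \to \Box \Box \psi$ collapses the double box and yields
\[
\textsf{S} \vdash \Box \psi \to \Box \phi.
\]
Using the hypothesis $\textsf{S} \vdash \Box \phi \to \phi$ we conclude $\textsf{S} \vdash \Box \psi \to \phi$. But this is exactly the right-hand side of the diagonal biconditional, so $\textsf{S} \vdash \psi$. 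Applying necessitation one last time gives $\textsf{S} \vdash \Box \psi$, and modus ponens with $\textsf{S} \vdash \Box \psi \to \phi$ delivers $\textsf{S} \vdash \phi$, as required.

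The step I expect to be the main obstacle, or at least the only step that is not purely mechanical modal reasoning, is the diagonal construction of $\psi$; once that self-referential sentence is in hand, the rest of the argument is a short sequence of applications of the three derivability conditions and modus ponens. A secondary subtlety worth flagging explicitly is that the conditions (i)--(iii) must be known to hold in $\textsf{S}$ itself, not merely in the metatheory; this is precisely why the hypothesis that $\textsf{S}$ contains Peano Arithmetic is essential.
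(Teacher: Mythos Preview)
Your proof is correct and is the standard argument for L\"ob's Theorem via the diagonal lemma and the Hilbert--Bernays--L\"ob derivability conditions. However, the paper does not actually prove this theorem: it is stated in the introduction as a known result with a citation to L\"ob's original paper \cite{Lob}, and is then used as a black box throughout (notably in Theorem~\ref{FBFB}). So there is no ``paper's own proof'' to compare against; you have supplied a proof where the paper simply quotes the literature.
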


We shall see that L\"{o}b's Theorem enables a flexible and secure form of mutual cooperation in this context. In particular, we first consider the intuitively appealing strategy ``cooperate if and only if I can prove that my opponent cooperates'', which we call FairBot. If we trust the formal system used by FairBot, we can conclude that it is unexploitable (in the sense that it never winds up with the sucker's payoff). When we play FairBot against itself (and give both agents sufficient power to find proofs), although either mutual cooperation or mutual defection seem philosophically consistent\footnote{As we shall see, the symmetry between mutual cooperation and mutual defection is broken by the positive criterion for action.}, it always finds mutual cooperation (Theorem \ref{FBFB})!\footnote{This result was proved by Vladimir Slepnev in an unpublished draft \cite{Slepnev}, and the proof is reproduced later in this paper with his permission.} Furthermore, we can construct another agent after the same fashion which improves on the main deficit of the above strategy: namely, that FairBot fails to correctly defect against CooperateBot.\footnote{For a philosophical discussion of why this is the obviously correct response to CooperateBot, see Section \ref{PhilosophicalConclusions}.} We call this agent PrudentBot.
\\
\\ Moreover, the underpinnings of this result (and the others in this paper) do not depend on the syntactical details of the programs, but only on their semantic interpretations in provability logic; therefore two such programs can cooperate, even if written differently (in several senses, for instance if they use different G\"{o}del encodings or different formal systems). Accordingly, we define a certain class of algorithms, such as FairBot and PrudentBot, whose behavior can be described in terms of a modal formula, and show that the actions these ``modal agents'' take against one another can be described purely in terms of these modal formulas. Using the properties of Kripke semantics, one can algorithmically derive the fixed-point solutions to the action of one modal agent against another; indeed, the results of this paper have additionally been checked by a computer program written by two of the authors, hosted at \url{github.com/machine-intelligence/provability}.
\\
\\ We next turn to the question of whether a meaningful sense of optimality exists among modal agents. Alas, there are several distinct obstacles to some natural attempts at a nontrivial and non-vacuous criterion for optimality among modal agents. This echoes the impossibility-of-optimality results of Anderlini \cite{Anderlini} and Canning \cite{Canning} on game theory for Turing machines with access to each others' source codes.
\\
\\ All the same, the results on L\"{o}bian cooperation represent a formalized version of robust mutual cooperation on the Prisoner's Dilemma, further validating some of the intuitions on ``superrationality'' and raising new questions on decision theory. The Prisoner's Dilemma with exchange of source code is analogous to Newcomb's problem, and indeed, this work was inspired by some of the philosophical alternatives to causal and evidential decision theory proposed for that problem (see Drescher \cite{Drescher} and Altair \cite{Altair}).
\\
\\ A brief outline of the structure of this paper: in Section \ref{formal}, we define our formal framework more explicitly. In Section \ref{playfair}, we introduce FairBot, prove that it achieves mutual cooperation with itself and cannot be exploited (Theorem \ref{FBFB}); we then introduce PrudentBot, and show that it is also unexploitable, cooperates mutually with itself and with FairBot, and defects against CooperateBot.
\\
\\ In Section \ref{Modal Agents}, we develop the theory of modal agents, with a focus on showing that their action against one another is well-defined. We also show that a feature of PrudentBot---namely, that it checks its opponent's response against DefectBot---is essential to its functioning: modal agents which do not use third parties cannot achieve mutual cooperation with FairBot unless they also cooperate with CooperateBot. Then, in Section \ref{optimality}, we discuss several obstacles to proving nontrivial optimality results for modal agents. In Section \ref{PhilosophicalConclusions}, we will explain our preference for PrudentBot over FairBot, and speculate on some future directions, before closing in Section \ref{OpenProblems} with a list of open problems in this area.

\section{Agents in Formal Logic}
\label{formal}
There are two different formalisms which we will bear in mind throughout this paper. The first formalism is that of algorithms, where we can imagine two Turing machines \texttt{X} and \texttt{Y}, each of which is given as input the code for the other, and which have clearly defined outputs corresponding to the options $C$ and $D$. (It is possible, of course, that one or both may fail to halt, though the algorithms that we will discuss will provably halt on all inputs.) This formalism has the benefit of concreteness: we could actually program such agents, although the ones we shall deal with are often very far from efficient in their requirements. It has the drawback, however, that proofs about algorithms which call upon each other are generally difficult and untidy, relying upon delicate bounds on (e.g.) the length of proofs.
\\
\\ Therefore, we will do our proofs in another framework: that of logical provability in certain formal systems. More specifically, the agents we will be most interested in can be interpreted via modal formulas in G\"{o}del-L\"{o}b provability logic, which is especially pleasant to work with. 
\\
\\ We believe this bait-and-switch is justified, because we strongly suspect that all of our tools will indeed have equivalently useful bounded versions.  Variants of L\"{o}b's theorem for bounded proof lengths seem readily derivable via currently available techniques in logic, such as those in \cite{Pudlak}.  We therefore encourage readers to construct bounded algorithmic versions of all logically defined agents in this paper, and explore their behavior.  We are confident that with the right bounds and parameters in place, all of our theorems will have analogues for bounded agents. \footnote{As of 2019, some progress in this direction has been made by \cite{Critch}, exhibiting a version of FairBot using bounded proof lengths.}
\\
\\ In particular, our ``agents'' will be formulas in Peano Arithmetic, and our criterion for action will be the existence of a finite proof in the tower of formal systems $\textsf{PA+n}$, where $\textsf{PA}$ is Peano Arithmetic, and $\textsf{PA+(n+1)}$ is the formal system whose axioms are the axioms of $\textsf{PA+n}$, plus the axiom that \textsf{PA+n} is consistent, i.e. that $\neg\Box\dots\Box\bot$ with $n+1$ copies of $\Box$.
\\
\\ Fix a particular G\"{o}del numbering scheme, and let \texttt{X} and \texttt{Y} each denote well-formed formulas with one free variable. Then let $X(Y)$ denote the formula where we replace each instance of the free variable in \texttt{X} with the G\"{o}del number of \texttt{Y}. If such a formula holds in the standard model of Peano Arithmetic, we interpret that as \texttt{X} cooperating with \texttt{Y}; if its negation holds, we interpret that as \texttt{X} defecting against \texttt{Y}. (In particular, we will prove theorems in \textsf{PA+n} to establish whether the agents we discuss cooperate or defect against one another.) Thus we can regard such formulas of arithmetic as decision-theoretic agents, and we will use ``source code'' to refer to their G\"{o}del numbers.
\begin{remark}
To maximize readability in the technical sections of this paper, we will use typewriter font for agents, which are formulas of Peano Arithmetic with a single free variable, like \texttt{X} and \texttt{CooperateBot}; we will use sans-serif font for the formal systems \textsf{PA+n}; and we will use italics for logical formulas with no free variables such as $X(Y)$. Furthermore, we will use $[X(Y)=C]$ and $[X(Y)=D]$ interchangeably with $X(Y)$ and $\neg X(Y)$.
\end{remark}
Of course, it is easy to create \texttt{X} and \texttt{Y} so that $X(Y)$ is an undecidable statement in all \textsf{PA+n} (e.g. the statement that the formal system $\textsf{PA+}\omega$ is consistent). But the philosophical phenomenon we're interested in can be achieved by agents which do not present this problem, and whose finitary versions in fact always return an answer in finite time.\\
\\
Two agents which are easy to define and clearly decidable are the agent which always cooperates (which we will call \texttt{CooperateBot}, or \texttt{CB} for short) and the agent which always defects (which we will call \texttt{DefectBot}, or \texttt{DB}). In pseudocode:

\begin{algorithm}[H]
 \SetKwInOut{Input}{Input}\SetKwInOut{Output}{Output}

 \Input{Source code of the agent \texttt{X}}
 \Output{\emph{C} or \emph{D}}

 return \emph{C}\;
 \caption{\texttt{CooperateBot (CB)}}
\end{algorithm}

\begin{algorithm}[H]
 \SetKwInOut{Input}{Input}\SetKwInOut{Output}{Output}

 \Input{Source code of the agent \texttt{X}}
 \Output{$C$ or $D$}

 return $D$\;
 \caption{\texttt{DefectBot (DB)}}
\end{algorithm}

\begin{remark}
In the Peano Arithmetic formalism, \texttt{CooperateBot} can be represented by a formula that is a tautology for every input, while \texttt{DefectBot} can be represented by the negation of such a formula. For any $\texttt X$, then, \textsf{PA} $\vdash [CB(X)=C]$ and \textsf{PA} $\vdash [DB(X)=D]$.\\
\\
Note further that $\textsf{PA}\not\vdash \neg\Box[DB(X)=C]$, but that $\textsf{PA+1}\vdash \neg\Box [DB(X)=C]$; this distinction is essential.
\end{remark}

Howard \cite{Howard}, McAfee \cite{McAfee} and Tennenholtz \cite{Tennenholtz} introduced functionally equivalent agent schemas, which we've taken to calling \texttt{CliqueBot}; these agents use quining to recognize self-copies and mutually cooperate, while defecting against any other agent. In pseudocode:

\begin{algorithm}[H]
\label{cliquebot}
 \SetKwInOut{Input}{Input}\SetKwInOut{Output}{Output}

 \Input{Source code of the agent \texttt{X}}
 \Output{$C$ or $D$}

\eIf {\emph{\texttt{X}$=$\texttt{CliqueBot}}}{
  return $C$\;
  }{
  return $D$\;
  }
 \caption{\texttt{CliqueBot}}
\end{algorithm}

By the diagonal lemma, there exists a formula of Peano Arithmetic which implements \texttt{CliqueBot}. (The analogous tool for computable functions is Kleene's recursion theorem~\cite{Kleene}; in this paper, we informally use ``quining'' to refer to both of these techniques.)\\
\\
\texttt{CliqueBot} has the nice property that it never experiences the sucker's payoff in the Prisoner's Dilemma. This is such a clearly important property that we will give it a name:
\begin{definition}
We say that an agent \texttt{X} is \emph{unexploitable} if there is no agent \texttt{Y} such that $X(Y)=C$ and $Y(X)=D$.
\end{definition}
However, \texttt{CliqueBot} has a notable drawback: it can only elicit mutual cooperation from agents that are syntactically identical to itself. (If two \texttt{CliqueBots} were written with different G\"{o}del-numbering schemes, for instance, they would defect against one another!)
\\
\\ One might patch this by including a list of source codes (or a schema for them), and cooperate if the opponent matches any of them; one would of course be careful to choose only source codes that would cooperate back with this variant. But this is a brittle form of mutual cooperation, and an opaque one: it takes a predefined circle of mutual cooperators as given. For this reason, it is worth looking for a more flexibly cooperative form of agent, one that can deduce for itself whether another agent is worth cooperating with.

\section{L\"{o}bian Cooperation}
\label{playfair}
A deceptively simple-seeming such agent is one we call \texttt{FairBot}. On a philosophical level, it cooperates with any agent that can be proven to cooperate with it. In pseudocode:

\begin{algorithm}[H]
\label{FairBot}
 \SetKwInOut{Input}{Input}\SetKwInOut{Output}{Output}

 \Input{Source code of the agent \texttt{X}}
 \Output{$C$ or $D$}

 \eIf {\emph{\textsf{PA}} $\vdash$ [X(FairBot)$=C$]}{
  return $C$\;
  }{
  return $D$\;
  }
 \caption{\texttt{FairBot (FB)}}
\end{algorithm}

\texttt{FairBot} references itself in its definition, but as with \texttt{CliqueBot}, this can be done via the diagonal lemma. By inspection, we see that \texttt{FairBot} is unexploitable: presuming that Peano Arithmetic is sound, \texttt{FairBot} will not cooperate with any agent that defects against \texttt{FairBot}.
\\
\\ The interesting question is what happens when \texttt{FairBot} plays against itself: it intuitively seems plausible either that it would mutually cooperate or mutually defect. As it turns out, though, L\"{o}b's Theorem guarantees that since the \texttt{FairBot}s are each seeking proof of mutual cooperation, they both succeed and indeed cooperate with one another! (This was first shown by Vladimir Slepnev \cite{Slepnev}.)
\begin{theorem} \label{FBFB}
\emph{\textsf{PA}} $\vdash [$FairBot(FairBot)$=C]$.
\end{theorem}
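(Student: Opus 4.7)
The plan is to reduce the statement directly to Löb's Theorem (Theorem \ref{lobtheorem}) via the self-referential definition of \texttt{FairBot}. Let $\phi$ abbreviate the sentence $[FairBot(FairBot) = C]$, i.e.\ the formula of PA obtained by substituting the Gödel number of \texttt{FairBot} into \texttt{FairBot}'s single free variable. By inspecting the pseudocode of Algorithm \ref{FairBot}, \texttt{FairBot} cooperates with an input \texttt{X} exactly when $\textsf{PA} \vdash [X(FairBot) = C]$; applied to $\texttt{X} = \texttt{FairBot}$, this means \texttt{FairBot} cooperates with itself exactly when $\textsf{PA} \vdash \phi$. The appeal to the diagonal lemma in the construction of \texttt{FairBot} is precisely what makes this informal equivalence into a theorem of PA: the fixed-point construction produces $\phi$ together with a PA-proof of the formal equivalence $\phi \leftrightarrow \Box \phi$.

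Granted that equivalence, the proof is immediate. First I would note that $\textsf{PA} \vdash \phi \leftrightarrow \Box \phi$ implies $\textsf{PA} \vdash \Box \phi \to \phi$. Then Löb's Theorem applied to $\phi$ yields $\textsf{PA} \vdash \phi$, which is the desired conclusion $\textsf{PA} \vdash [FairBot(FairBot) = C]$.

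The only real obstacle is justifying the equivalence $\textsf{PA} \vdash \phi \leftrightarrow \Box \phi$ rigorously, since it requires care about the interplay between \texttt{FairBot}'s syntactic self-reference (via quining / the diagonal lemma) and its semantic behavior. One has to verify that the formula produced by the diagonal lemma actually encodes the intended decision procedure, so that $\phi$ is provably equivalent in PA to the statement ``there exists a PA-proof of $\phi$,'' which is $\Box \phi$. This is a standard but delicate verification, after which Löb's Theorem does all the remaining work and no further induction, reflection, or Kripke-model reasoning is needed.
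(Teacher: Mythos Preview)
Your argument is correct and coincides with the paper's ``Simple Version'' proof: observe $\mathsf{PA}\vdash \Box\phi\to\phi$ for $\phi=[FB(FB)=C]$ from the definition of \texttt{FairBot}, then apply L\"ob's Theorem. The paper additionally supplies a ``Real Version'' that treats the two copies of \texttt{FairBot} as syntactically distinct agents and closes a two-formula L\"obian cycle, but your proof already matches the paper's primary argument.
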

\begin{proof}[Proof (Simple Version):]
By inspection of \texttt{FairBot}, \textsf{PA}$\vdash(\Box[FB(FB)=C])\to [FB(FB)=C]$. By L\"{o}b's Theorem, Peano Arithmetic does indeed prove that \emph{FairBot(FairBot)=C}.
\end{proof}

However, it is a tidy logical accident that the two agents are the same; we will understand better the mechanics of mutual cooperation if we pretend in this case that we have two distinct implementations, $\texttt{FairBot}_1$ and $\texttt{FairBot}_2$, and prove mutual cooperation from their formulas without using the fact that their actions are identical.
\begin{proof}[Proof of Theorem \ref{FBFB} (Real Version):]
Let $A$ be the formula ``$FB_1(FB_2)=C$'' and $B$ be the formula ``$FB_2(FB_1)=C$''. By inspection, \textsf{PA}$\vdash \Box A \to B$ and \textsf{PA}$\vdash \Box B \to A$. This sort of ``L\"{o}bian circle'' works out as follows:
\begin{eqnarray*}
&\textsf{PA} \vdash (\Box A\to B)\wedge (\Box B \to A) & \text{(see above)}\\
&\textsf{PA} \vdash (\Box A\wedge \Box B)\to (A \wedge B) & \text{(follows from above)}\\
&\textsf{PA} \vdash \Box(A\wedge B)\to (\Box A \wedge \Box B) & \text{(tautology)}\\
&\textsf{PA} \vdash \Box(A \wedge B) \to (A \wedge B) & \text{(previous lines)}\\
&\textsf{PA} \vdash A\wedge B & \text{(L\"{o}b's Theorem).}
\end{eqnarray*}
\end{proof}
\begin{remark}
One way to build a finitary version of \texttt{FairBot} is to write an agent \texttt{FiniteFairBot} that looks through all proofs of length $\leq N$ to see if any are a proof of $[X(FiniteFairBot)=C]$, and cooperates iff it finds such a proof. If $N$ is large enough, the bounded version of L\"{o}b's Theorem implies the equivalent of Theorem \ref{FBFB}.
\end{remark}
\begin{remark}
Unlike a \texttt{CliqueBot}, \texttt{FairBot} will find mutual cooperation even with versions of itself that are written in other programming languages. In fact, even the choice of formal system does not have to be identical for two versions of \texttt{FairBot} to achieve mutual cooperation! It is enough that there exist a formal system \textsf{S} in which L\"{o}bian statements are true, such that anything provable in \textsf{S} is provable in each of the formal systems used, and such that \textsf{S} can prove the above. (Note in particular that even \emph{incompatible} formal systems can have this property: a version of \texttt{FairBot} which looks for proofs in the formal system \textsf{PA}$+\neg$\textsf{Con(PA)} will still find mutual cooperation with a \texttt{FairBot} that looks for proofs in \textsf{PA+1}.)
\end{remark}
However, \texttt{FairBot} wastes utility by cooperating even with \texttt{CooperateBot}. (See Section \ref{PhilosophicalConclusions} for the reasons we take this as a serious issue.) Thus we would like to find a similarly robust agent which cooperates mutually with itself and with \texttt{FairBot} but which defects against \texttt{CooperateBot}.\\
\\
Consider the agent \texttt{PrudentBot}, defined as follows:

\begin{algorithm}[H]
\label{PrudentBot}
 \SetKwInOut{Input}{Input}\SetKwInOut{Output}{Output}

 \Input{Source code of the agent \texttt{X}}
 \Output{$C$ or $D$}
 \If {\textsf{\emph{PA}} $\vdash$ [X(PrudentBot)=C] \emph{and} \textsf{\emph{PA+1}} $\vdash$ [X(DefectBot)=D]}{
  \Return $C$\;
  }
  return $D$\;
 \caption{\texttt{PrudentBot (PB)}}
\end{algorithm}
\begin{theorem}
\label{PrudentRocks}
\texttt{PrudentBot} is unexploitable, mutually cooperates with itself and with \texttt{FairBot}, and defects against \texttt{CooperateBot}.
\end{theorem}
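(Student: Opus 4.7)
The plan is to verify each of the four claims in turn, with the two cooperation claims reducing to variants of the L\"obian circle used in the proof of Theorem \ref{FBFB}. Unexploitability is immediate from the definition of \texttt{PrudentBot}: if $PB(Y)=C$ then \textsf{PA}$\vdash[Y(PB)=C]$, and by soundness of \textsf{PA} we genuinely have $Y(PB)=C$, so \texttt{Y} does not defect against \texttt{PB}. Defection against \texttt{CooperateBot} comes from the second conjunct of \texttt{PB}'s guard: since \textsf{PA}$\vdash[CB(DB)=C]$, soundness of \textsf{PA+1} forces \textsf{PA+1}$\not\vdash[CB(DB)=D]$, so \texttt{PB} defects against \texttt{CB}.

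For mutual cooperation with \texttt{FairBot}, write $\Box_1$ for provability in \textsf{PA+1} and set $A=[PB(FB)=C]$, $B=[FB(PB)=C]$. By inspection of the two agents, \textsf{PA} proves both $B\leftrightarrow\Box A$ and $A\leftrightarrow(\Box B\wedge \Box_1[FB(DB)=D])$. The key subsidiary step is to show that \textsf{PA}$\vdash\Box_1[FB(DB)=D]$: working in \textsf{PA+1}, the fact \textsf{PA}$\vdash[DB(FB)=D]$ combined with $\mathrm{Con}(\textsf{PA})$ rules out $\Box[DB(FB)=C]$, forcing $[FB(DB)=D]$; then $\Sigma_1$-completeness of \textsf{PA} converts the existence of this \textsf{PA+1}-proof into a provable box inside \textsf{PA}. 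Substituting, the second equivalence becomes \textsf{PA}$\vdash A\leftrightarrow\Box B$, and running the L\"obian circle of Theorem \ref{FBFB} on the pair $(A,B)$ yields \textsf{PA}$\vdash A\wedge B$, hence both $PB(FB)=C$ and $FB(PB)=C$.

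Mutual cooperation with itself is structurally identical. Setting $A=[PB(PB)=C]$, inspection gives \textsf{PA}$\vdash A\leftrightarrow(\Box A\wedge \Box_1[PB(DB)=D])$; the same trick (now $\mathrm{Con}(\textsf{PA})$ refutes $\Box[DB(PB)=C]$, so \textsf{PA+1}$\vdash[PB(DB)=D]$, hence \textsf{PA}$\vdash\Box_1[PB(DB)=D]$) reduces the equivalence to \textsf{PA}$\vdash A\leftrightarrow\Box A$, and L\"ob's Theorem finishes the job.

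The main obstacle is the auxiliary ``defects against \texttt{DefectBot}'' claims and their transport back into \textsf{PA}. These are exactly the claims for which \texttt{PrudentBot}'s second conjunct was designed to consult \textsf{PA+1} rather than \textsf{PA}: $\mathrm{Con}(\textsf{PA})$ is not provable in \textsf{PA}, but it is the precise ingredient needed to see that \texttt{FairBot}-style agents refuse to cooperate with \texttt{DefectBot}. Once these $\Sigma_1$ facts are promoted to provable boxes inside \textsf{PA}, both cooperation arguments become routine applications of the L\"obian circle, and the extra discrimination between \texttt{CB} and the others drops out of the asymmetry between \textsf{PA} and \textsf{PA+1}.
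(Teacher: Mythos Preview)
Your proof is correct and follows essentially the same approach as the paper: both reduce the cooperation claims to the L\"obian circle by first establishing $\textsf{PA+1}\vdash[FB(DB)=D]$ and $\textsf{PA+1}\vdash[PB(DB)=D]$, then promoting these to $\textsf{PA}\vdash\Box_1[\cdot]$ so that \texttt{PrudentBot}'s second conjunct drops out. The only cosmetic difference is that the paper records the \texttt{CooperateBot} case as $\textsf{PA+2}\vdash[PB(CB)=D]$ rather than arguing semantically via soundness as you do, and it writes $\Box(\neg\Box\bot\to\cdot)$ where you write $\Box_1$; neither affects the substance.
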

\begin{proof}
Unexploitability is again immediate from the definition of \texttt{PrudentBot} and the assumption that \textsf{PA} is sound, since cooperation by \texttt{PrudentBot} requires a proof that its opponent cooperates against it.\\
\\
In particular, $\textsf{PA+1}\vdash [PB(DB)=D]$ (since $\textsf{PA}\vdash [DB(PB)=D]$, $\textsf{PA+1}\vdash \neg \Box[DB(PB)=C]$).\\
\\ 
It is likewise clear that $\textsf{PA+2}\vdash [PB(CB)=D]$.\\
\\
Now since $\textsf{PA+1}\vdash [FB(DB)=D]$ and therefore $\textsf{PA}\vdash\Box(\neg\Box\bot\to[FB(DB)=D])$, we again have the L\"{o}bian cycle where $\textsf{PA}\vdash[PB(FB)=C]\leftrightarrow \Box[FB(PB)=C]$, and of course vice versa; thus \texttt{PrudentBot} and \texttt{FairBot} mutually cooperate.
\\
\\ And as we have established $\textsf{PA+1}\vdash [PB(DB)=D]$, we have the same L\"{o}bian cycle for \texttt{PrudentBot} and itself.
\end{proof}

\begin{remark}
It is important that we look for proofs of $X(DB)=D$ in a stronger formal system than we use for proving $X(PB)=C$; if we do otherwise, the resulting variant of \texttt{PrudentBot} would lose the ability to cooperate with itself. However, it is not necessary that the formal system used for $X(DB)=D$ be stronger by only one step than that used for $X(PB)=C$; if we use a much higher $\textsf{PA+n}$ there, we broaden the circle of potential cooperators without thereby sacrificing safety.
\end{remark}

\section{Modal Agents}
\label{Modal Agents}
It is instructive to consider \texttt{FairBot} and \texttt{PrudentBot} as modal statements in G\"{o}del-L\"{o}b provability logic (often denoted \textbf{GL}). Namely, if we consider the actions of \texttt{FairBot} and any other agent \texttt{X} against one another, then the definition of \texttt{FairBot} is simply $[FB(X)=C]\leftrightarrow \Box [X(FB)=C]$, and the definition of \texttt{PrudentBot} is $[PB(X)]\leftrightarrow (\Box[X(PB)]\wedge \Box(\neg\Box\bot\to\neg[X(DB)])).$ There are a number of tools, including fixed-point theorems and Kripke semantics, which work for families of such modal statements; and thus we will define a class of \emph{modal agents} for the purpose of study.
\\
\\ Informally, a modal agent is one whose actions are determined solely by the provability of statements regarding its opponent's actions against itself and against other simpler agents\footnote{For instance, \texttt{PrudentBot} tries to prove that its opponent defects against \texttt{DefectBot}. As for the requirement that these secondary agents be ``simpler'', we want to avoid the possibility of infinite regress using something like Russell's theory of types.}. That is, if \texttt{X} is a modal agent, then there is a modal-logic formula\footnote{That is, a formula built from $\Box$, $\top$, the logical operators $\wedge$, $\vee$, $\rightarrow$, $\leftrightarrow$ and $\neg$, and the input variables.} $\varphi$ and a fixed set of simpler modal agents $\texttt Y_1,\dots,\texttt Y_N$ such that, for any opponent $\texttt Z$, 
\begin{equation}
\label{modalagent}
[X(Z)=C]\leftrightarrow \varphi\left([Z(X)=C],[Z(Y_1)=C],\dots,[Z(Y_N)=C]\right).
\end{equation}
Furthermore, since a modal agent does all of this via provability, the formula $\varphi$ must be \emph{fully modalized}: all instances of variables must be contained inside sub-formulas of the form $\Box\psi$.
\\
\\We must lay some groundwork (following Lindstr\"om \cite{Lindstrom}) before formally defining the class of modal agents. Write $\varphi(p_1,\dotsc,p_n)$ to denote a formula $\varphi$ in the language of \textsf{GL} whose free (propositional) variables are included in the set $\{p_1,\dotsc,p_n\}$. (Note that this is different from Lindstr\"om's convention, who doesn't display the free variables.)

\begin{theorem}[Arithmetic soundness of \textsf{GL}] \label{theorem-arithmetic-soundness}
Suppose that $\mathsf{GL}\vdash\varphi(p_1,\dotsc,p_n)$, and that $\psi_1,\dotsc,\psi_n$ are closed formulas in the language of $\mathsf{PA}$. Then $\mathsf{PA}\vdash\varphi(\psi_1,\dotsc,\psi_n)$.
\end{theorem}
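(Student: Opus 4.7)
The plan is to induct on the length of a \textsf{GL}-derivation of $\varphi(p_1,\dotsc,p_n)$. The arithmetical realization $\varphi(\psi_1,\dotsc,\psi_n)$ is obtained by substituting each closed \textsf{PA}-formula $\psi_i$ for $p_i$ and interpreting each modal $\Box$ as the standard \textsf{PA}-provability predicate applied to (the G\"odel number of) the translated subformula. I would first show that every axiom of \textsf{GL}, under any such realization, is a theorem of \textsf{PA}, and then that the two inference rules preserve \textsf{PA}-provability.

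For the axioms: propositional tautologies of \textsf{GL} translate to propositional tautologies over \textsf{PA}-atoms, which \textsf{PA} proves outright. The distribution axiom $\Box(\varphi\to\psi)\to(\Box\varphi\to\Box\psi)$ translates to the second Hilbert-Bernays-L\"ob derivability condition, a standard theorem of \textsf{PA}. The L\"ob axiom $\Box(\Box\varphi\to\varphi)\to\Box\varphi$ translates, for each realization, to the formalized version of L\"ob's theorem for a particular closed \textsf{PA}-sentence. For the inference rules: modus ponens is immediate; necessitation uses the first derivability condition, namely that if $\textsf{PA}\vdash\chi^*$ then $\textsf{PA}\vdash\Box\chi^*$, which is exactly the translation of $\Box\chi$.

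The main obstacle is the L\"ob axiom case. The subtlety is that we need not merely L\"ob's theorem as a meta-theorem (Theorem \ref{lobtheorem}) but its full internalization within \textsf{PA}: the standard proof, which constructs a fixed point $\lambda$ with $\textsf{PA}\vdash\lambda\leftrightarrow(\Box\lambda\to\varphi^*)$ via the diagonal lemma and then chains applications of all three Hilbert-Bernays-L\"ob conditions (including $\Box\chi\to\Box\Box\chi$), can itself be carried out inside \textsf{PA}, yielding a \textsf{PA}-proof of $\Box(\Box\varphi^*\to\varphi^*)\to\Box\varphi^*$ for each specific realization. Once this is in hand, the remainder of the induction is routine bookkeeping; the technical details can be cited from Boolos \cite{Boolos}.
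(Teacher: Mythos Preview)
Your outline is correct: the standard route is exactly this induction on \textsf{GL}-derivations, verifying the Hilbert--Bernays--L\"ob derivability conditions for the distribution axiom and necessitation, and invoking the formalized L\"ob theorem for the L\"ob axiom. The paper itself does not give a proof but simply cites Lindstr\"om~\cite{Lindstrom}, Theorem~1; what you have sketched is essentially the argument one finds there (or in Boolos~\cite{Boolos}), so there is no substantive divergence to discuss.
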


\begin{proof}
This is Theorem 1 of Lindstr\"om.
\end{proof}

\begin{theorem}[Modal fixed point theorem] \label{theorem-fpt}
Suppose that the formula $\varphi(p,q_1,\dotsc,q_n)$ in the language of $\mathsf{GL}$ is modalized in $p$. Then there is a modal formula $\psi(q_1,\dotsc,q_n)$ such that
\[
\mathsf{GL} \;\vdash\; \psi(q_1,\dotsc,q_n) \;\leftrightarrow\; \varphi(\psi(q_1,\dotsc,q_n),\,q_1,\dotsc,q_n).
\]
Moreover, if $\varphi$ is modalized in $q_i$, then so is $\psi$.
\end{theorem}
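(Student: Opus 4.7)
The plan is to follow the classical de Jongh--Sambin explicit fixed-point construction, which proceeds by induction on the syntactic structure of $\varphi(p,q_1,\dotsc,q_n)$ and relies on Löb's theorem at the crucial step. Since the theorem has been in the literature since the 1970s, my aim is to sketch the shape of the argument and flag where the real work lies, rather than to discover something new.

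The base case is when $p$ does not occur in $\varphi$; then I would just take $\psi = \varphi(q_1,\dotsc,q_n)$ and the required equivalence is immediate. For the inductive step, I exploit the hypothesis that $\varphi$ is modalized in $p$: every occurrence of $p$ lies inside some $\Box$. The strategy is to locate a boxed subformula $\Box\chi(p,\vec q)$ that contains $p$, invoke the induction hypothesis to obtain a $p$-free formula $\psi_\chi(\vec q)$ satisfying $\mathsf{GL}\vdash\psi_\chi\leftrightarrow\chi(\psi_\chi,\vec q)$, substitute $\psi_\chi$ for $p$ inside $\Box\chi$, and verify that the resulting formula remains a fixed point of the outer $\varphi$. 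The verification uses the standard GL machinery: the normality axiom $\Box(A\leftrightarrow B)\to(\Box A\leftrightarrow\Box B)$ lets us move equivalences under boxes, and Löb's theorem in the form $\Box(\Box A\to A)\to\Box A$ supplies the final step for closing the circle of equivalences, exactly as in the ``Löbian circle'' calculation used above for Theorem~\ref{FBFB}.

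The ``moreover'' clause about preservation of modalization in $q_i$ should follow by inspection: throughout the construction, $\psi$ is assembled from subformulas of $\varphi$ by propositional combinations and by substituting already-modalized pieces into boxed positions, so no occurrence of $q_i$ ever escapes the scope of a $\Box$ that contained it in $\varphi$. I would formalize this by adding the modalization-preservation statement to the induction hypothesis and checking it at each case.

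The step I expect to be the main obstacle is choosing the right complexity measure for the induction. A naive induction on the structure of $\varphi$ breaks down at propositional connectives --- e.g. for $\varphi=\varphi_1\wedge\varphi_2$, one cannot simply take the conjunction of independently computed fixed points for $\varphi_1$ and $\varphi_2$, because both depend on the same $p$. The standard remedy is to induct on a measure such as the number of boxes in $\varphi$ that contain $p$, together with a simultaneous-fixed-point lemma (or the Bernardi--Smoryński refinement) that handles several boxed subformulas at once. Carefully setting up this measure, and bookkeeping the substitutions so that each inductive call sees a formula of strictly smaller complexity while still modalized in $p$, is where the proof departs from a routine calculation; this is the part where I would lean on Lindström's presentation rather than attempt a derivation from scratch.
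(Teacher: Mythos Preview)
Your proposal is correct and aligned with the paper, though you supply far more detail than the paper itself does. The paper's entire proof is a citation---``Except for the last sentence, this is Theorem~11 of Lindstr\"om''---together with the one-line remark that the modalization-preservation clause ``is obvious from Lindstr\"om's proof, since the $\psi$ constructed there differs from $\varphi$ only inside boxes.'' Your sketch of the de Jongh--Sambin construction is precisely the argument behind that cited theorem, and your justification of the ``moreover'' clause (no $q_i$ ever escapes a box that contained it) is exactly the paper's observation in different words.
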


\begin{proof}
Except for the last sentence, this is Theorem 11 of Lindstr\"om. The last sentence is obvious from Lindstr\"om's proof, since the $\psi$ constructed there differs from $\varphi$ only inside boxes.
\end{proof}

Write $\Box^+\varphi$ to mean $\varphi\wedge\Box\varphi$.

\begin{theorem}[Uniqueness of modal fixed points] \label{theorem-ufpt}
Suppose that $\varphi(p,q_1,\dotsc,q_n)$ is modalized in $p$. Then
\[
\mathsf{GL} \;\vdash\; \Box^+\big(p\leftrightarrow\varphi(p,q_1,\dotsc,q_n)\big) \;\wedge\; \Box^+\big(p'\leftrightarrow\varphi(p',q_1,\dotsc,q_n)\big) \;\to\; (p\leftrightarrow p').
\]
\end{theorem}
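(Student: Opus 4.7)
The plan is to establish a substitution lemma that shows $\varphi$ is ``equivariant'' under equivalences between its $p$-slot inputs, and then close the argument with a single application of Löb's theorem.

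First, I would prove by induction on the complexity of an arbitrary modal formula $\chi(p)$ (suppressing the parameters $\vec{q}$) the general substitution lemma
$$\mathsf{GL} \;\vdash\; \Box^+(p \leftrightarrow p') \to (\chi(p) \leftrightarrow \chi(p')).$$
The atomic and Boolean cases are immediate. The only interesting case is $\chi = \Box\psi$: necessitating and distributing the inductive hypothesis gives $\mathsf{GL} \vdash \Box\Box^+(p \leftrightarrow p') \to (\Box\psi(p) \leftrightarrow \Box\psi(p'))$, and $\mathsf{GL} \vdash \Box^+(p \leftrightarrow p') \to \Box\Box^+(p \leftrightarrow p')$ follows from the 4-axiom $\Box X \to \Box\Box X$ (itself provable in $\mathsf{GL}$ via Löb) together with $\Box$-distribution over conjunction.

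Second, I would specialize to $\varphi$. Since $\varphi(p,\vec{q})$ is modalized in $p$, it is a Boolean combination of $p$-free formulas and subformulas of the form $\Box\psi(p)$. For each $\Box\psi(p)$, the lemma applied to $\psi$ plus one layer of necessitation and distribution yields $\mathsf{GL} \vdash \Box(p \leftrightarrow p') \to (\Box\psi(p) \leftrightarrow \Box\psi(p'))$---note that only $\Box$ on the antecedent is needed here rather than $\Box^+$, because the extra $\Box$-layer lets the 4-axiom supply the missing inner conjunct. Combining Boolean-ly gives
$$\mathsf{GL} \;\vdash\; \Box(p \leftrightarrow p') \to \bigl(\varphi(p,\vec{q}) \leftrightarrow \varphi(p',\vec{q})\bigr).$$
Letting $H := (p \leftrightarrow \varphi(p,\vec{q})) \wedge (p' \leftrightarrow \varphi(p',\vec{q}))$, the (unboxed) fixed-point hypotheses then give $\mathsf{GL} \vdash H \to \bigl(\Box(p \leftrightarrow p') \to (p \leftrightarrow p')\bigr)$.

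Third, I would close with Löb's theorem. Necessitating and distributing across the implication yields $\mathsf{GL} \vdash \Box H \to \Box\bigl(\Box(p \leftrightarrow p') \to (p \leftrightarrow p')\bigr)$, and the internal form of Theorem \ref{lobtheorem} converts the consequent into $\Box(p \leftrightarrow p')$. Combining with the previous line gives $\mathsf{GL} \vdash H \wedge \Box H \to (p \leftrightarrow p')$, i.e.\ $\mathsf{GL} \vdash \Box^+ H \to (p \leftrightarrow p')$. Finally, since $\Box^+$ distributes over conjunction, $\Box^+(p \leftrightarrow \varphi(p,\vec{q})) \wedge \Box^+(p' \leftrightarrow \varphi(p',\vec{q}))$ implies $\Box^+ H$, finishing the proof. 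The main obstacle I anticipate is the bookkeeping in the substitution lemma, specifically tracking whether $\Box$ or $\Box^+$ is required on the antecedent as one crosses box boundaries; once that is straightened out, the Löbian closure is essentially forced.
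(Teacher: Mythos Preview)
Your argument is correct. The paper, however, does not prove this theorem at all: its entire proof is the single sentence ``This is Theorem 12 of Lindstr\"om.'' So there is no real comparison to make on approach; you have supplied the actual argument that the paper outsources to a reference.

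It is worth noting that the substitution lemma you prove in your first step,
\[
\mathsf{GL}\;\vdash\;\Box^+(p\leftrightarrow p')\;\to\;\bigl(\chi(p)\leftrightarrow\chi(p')\bigr),
\]
is exactly Lindstr\"om's Lemma~8, which the paper invokes explicitly later (in the proof of Lemma~\ref{lemma-modal-substitution}). Your proof is the standard one (see also Boolos): the substitution lemma plus modalization in $p$ gives $\Box(p\leftrightarrow p')\to(\varphi(p)\leftrightarrow\varphi(p'))$, the fixed-point hypotheses turn this into a L\"ob premise $\Box(p\leftrightarrow p')\to(p\leftrightarrow p')$ under $H$, and a single application of the formalized L\"ob axiom closes the loop. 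Your tracking of $\Box$ versus $\Box^+$ on the antecedent is right, and the observation that modalization lets you weaken $\Box^+$ to $\Box$ at the crucial step (via the 4-axiom supplying the inner conjunct) is precisely the point.
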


\begin{proof}
This is Theorem 12 of Lindstr\"om.
\end{proof}

\begin{corollary}[Uniqueness of arithmetic fixed points] \label{corollary-arithmetic-ufpt}
Suppose that $\varphi(p,q_1,\dotsc,q_n)$ is modalized in $p$, and let $\psi,\psi',\psi_1,\dotsc,\psi_n$ be closed formulas in the language of $\mathsf{PA}$. If
\[
\mathsf{PA}\;\vdash\;\psi\,\leftrightarrow\,\varphi(\psi,\psi_1,\dotsc,\psi_n) \;\quad\;\text{and}\;\quad\; \mathsf{PA}\;\vdash\;\psi'\,\leftrightarrow\,\varphi(\psi',\psi_1,\dotsc,\psi_n),
\]
then $\mathsf{PA}\vdash\psi\leftrightarrow\psi'$.
\end{corollary}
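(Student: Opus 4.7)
The plan is to derive this as a straightforward corollary by combining the modal uniqueness theorem (Theorem \ref{theorem-ufpt}) with arithmetic soundness of \textsf{GL} (Theorem \ref{theorem-arithmetic-soundness}), plus one application of formalized necessitation.

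First I would apply arithmetic soundness to the \textsf{GL}-provable formula supplied by Theorem \ref{theorem-ufpt}. Substituting $\psi$ for $p$, $\psi'$ for $p'$, and $\psi_i$ for $q_i$ under the standard translation that sends $\Box$ to the \textsf{PA}-provability predicate, this yields
\[
\mathsf{PA} \;\vdash\; \Box^+\bigl(\psi\leftrightarrow\varphi(\psi,\psi_1,\dotsc,\psi_n)\bigr) \;\wedge\; \Box^+\bigl(\psi'\leftrightarrow\varphi(\psi',\psi_1,\dotsc,\psi_n)\bigr) \;\to\; (\psi\leftrightarrow\psi').
\]
So it suffices to establish both $\Box^+$ conjuncts inside \textsf{PA}.

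Next I would verify both conjuncts. The unboxed halves $\psi\leftrightarrow\varphi(\psi,\vec\psi)$ and $\psi'\leftrightarrow\varphi(\psi',\vec\psi)$ are exactly the hypotheses. For the boxed halves, I invoke the necessitation rule for the \textsf{PA}-provability predicate: whenever $\mathsf{PA}\vdash\chi$, one has $\mathsf{PA}\vdash\Box\chi$ (this is a standard property of the G\"odel provability predicate and is how $\Box$ is interpreted under the translation used in Theorem \ref{theorem-arithmetic-soundness}). Applying this to each hypothesis provides the two boxed conjuncts inside \textsf{PA}. Combining with the implication above by modus ponens gives $\mathsf{PA}\vdash\psi\leftrightarrow\psi'$.

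There is no real obstacle here; the only subtle point is ensuring that the translation used in the statement of Theorem \ref{theorem-arithmetic-soundness} interprets $\Box$ as \textsf{PA}-provability (so that necessitation is available), which is indeed the standard setup in Lindstr\"om's treatment. Once that is noted, the proof is a three-line deduction: substitute in the modal uniqueness schema, box up the hypotheses by necessitation, and conclude by modus ponens.
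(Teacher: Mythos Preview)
Your proposal is correct and follows essentially the same route as the paper: apply arithmetic soundness to the conclusion of Theorem~\ref{theorem-ufpt} to obtain the displayed implication in $\mathsf{PA}$, then use the fact that $\mathsf{PA}\vdash\chi$ implies $\mathsf{PA}\vdash\Box^+\chi$ (i.e., necessitation plus the original hypothesis) to discharge both antecedent conjuncts and conclude by modus ponens. The paper's proof is just a slightly terser rendering of exactly these three steps.
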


\begin{proof}
By applying Theorem~\ref{theorem-arithmetic-soundness} to the conclusion of Theorem~\ref{theorem-ufpt}, we obtain
\[
\mathsf{PA} \;\vdash\; \Box^+\big(\psi\leftrightarrow\varphi(\psi,\psi_1,\dotsc,\psi_n)\big) \;\wedge\; \Box^+\big(\psi'\leftrightarrow\varphi(\psi',\psi_1,\dotsc,\psi_n)\big) \;\to\; (\psi\leftrightarrow \psi').
\]
But for any formula $\tilde\varphi$, if $\mathsf{PA}\vdash\tilde\varphi$ then $\mathsf{PA}\vdash\Box^+(\tilde\varphi)$, so the conclusion follows.
\end{proof}

\begin{lemma} \label{lemma-modal-substitution}
If $\varphi(p_1,\dotsc,p_n)$ is a modal formula and $\psi_1,\psi_1',\dotsc,\psi_n,\psi_n'$ are arithmetic formulas such that $\mathsf{PA}\vdash\psi_i\leftrightarrow\psi_i'$ for each $i$, then $\mathsf{PA}\vdash\varphi(\psi_1,\dotsc,\psi_n)\leftrightarrow\varphi(\psi_1',\dotsc,\psi_n')$.
\end{lemma}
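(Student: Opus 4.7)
The plan is a straightforward structural induction on the modal formula $\varphi$. The base cases are immediate: if $\varphi$ is $\top$ or $\bot$, the conclusion is a tautology; if $\varphi = p_i$, the conclusion is exactly the hypothesis $\mathsf{PA}\vdash\psi_i\leftrightarrow\psi_i'$. For the boolean connectives ($\wedge$, $\vee$, $\to$, $\leftrightarrow$, $\neg$), the inductive step is pure propositional logic: from $\mathsf{PA}\vdash\chi_j(\vec\psi)\leftrightarrow\chi_j(\vec\psi')$ for each immediate subformula $\chi_j$, the biconditional for the composite formula follows inside \textsf{PA}.

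The only case that requires even a moment's thought is $\varphi = \Box\chi$. Here the inductive hypothesis gives $\mathsf{PA}\vdash\chi(\vec\psi)\leftrightarrow\chi(\vec\psi')$. Applying necessitation (formalized inside \textsf{PA} as the fact that \textsf{PA} proves $\Box\tilde\varphi$ whenever it proves $\tilde\varphi$, which is exactly the step invoked at the end of Corollary~\ref{corollary-arithmetic-ufpt}) yields $\mathsf{PA}\vdash\Box\bigl(\chi(\vec\psi)\leftrightarrow\chi(\vec\psi')\bigr)$, and then the $K$-axiom for $\Box$ (i.e.\ $\Box(A\leftrightarrow B)\to(\Box A\leftrightarrow \Box B)$, which is a theorem of \textsf{GL} and hence, by Theorem~\ref{theorem-arithmetic-soundness}, of \textsf{PA} under any substitution) gives $\mathsf{PA}\vdash\Box\chi(\vec\psi)\leftrightarrow\Box\chi(\vec\psi')$.

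There is no real obstacle here; if anything, the mild bookkeeping issue is just making sure that ``modal formula'' in the statement is parsed as a formula of \textsf{GL} whose propositional variables get substituted with arithmetic formulas, so that the induction is on the \textsf{GL}-syntax of $\varphi$ and not on the arithmetic complexity of the $\psi_i$. Once that is pinned down, the three-line induction above finishes the proof.
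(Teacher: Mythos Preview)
Your proof is correct, and the route is genuinely different from the paper's. The paper does not induct on the structure of $\varphi$ in \textsf{PA}; instead it quotes Lindstr\"om's Lemma~8, which says that for any modal formula $\tilde\varphi(p,q_1,\dotsc,q_m)$ one has $\mathsf{GL}\vdash\Box^+(p\leftrightarrow p')\to\bigl(\tilde\varphi(p,\vec q)\leftrightarrow\tilde\varphi(p',\vec q)\bigr)$, applies this $n$ times (once per variable), and then transfers the whole thing to \textsf{PA} via arithmetic soundness (Theorem~\ref{theorem-arithmetic-soundness}), using that $\mathsf{PA}\vdash\psi_i\leftrightarrow\psi_i'$ entails $\mathsf{PA}\vdash\Box^+(\psi_i\leftrightarrow\psi_i')$. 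Your argument is more elementary and self-contained, since it does not import an external lemma; the paper's argument has the advantage of making explicit that the substitution principle is already a theorem of \textsf{GL}, with the arithmetic version a black-box consequence of soundness. Underneath, both rely on the same two facts---necessitation and the distribution of $\Box$ over biconditionals---just packaged differently.
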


\begin{proof}
Lindstr\"om's Lemma~8 states that for any modal formula $\tilde\varphi(p,q_1,\dotsc,q_m)$,
\[
\mathsf{GL}\;\vdash\;\Box^+(p\leftrightarrow p')\;\to\;\big(\tilde\varphi(p,q_1,\dotsc,q_m)\leftrightarrow\tilde\varphi(p',q_1,\dotsc,q_m)\big).
\]
The desired result is obtained by applying this $n$ times and then appealing to Theorem~\ref{theorem-arithmetic-soundness}.
\end{proof}

\begin{definition}
An \emph{agent} is a well-formed formula in the language of $\mathsf{PA}$ of (at most) one free variable.
\end{definition}
If $\texttt X$ and $\texttt Y$ are agents, we write $[X(Y)=C]$, or $[X(Y)]$ for short, for the application of $\texttt X$ to the G\"odel number of $\texttt Y$; we interpret this logical formula as the assertion that $\texttt X$ cooperates when playing against~$\texttt Y$. Accordingly, we say that \texttt{X} cooperates with \texttt{Y} if $[X(Y)]$ holds in the standard model of \textsf{PA}.

\begin{definition}
An agent $\texttt X$ is called a \emph{modal agent of rank $k\geq 0$} if there are modal agents $\texttt Y_1,\dotsc,\texttt Y_n$ of rank $<k$ and a fully modalized formula $\varphi(p,q_1,\dotsc,q_n)$ such that such that for all agents~$\texttt Z$,
\[
\mathsf{PA}\;\vdash\;[X(Z)]\;\leftrightarrow\;\varphi([Z(X)],[Z(Y_1)],\dotsc,[Z(Y_n)]).
\]
\end{definition}
\texttt{CooperateBot}, \texttt{DefectBot}, \texttt{FairBot} and \texttt{PrudentBot} are all modal agents, but as we shall see, \texttt{CliqueBot} is not.\\
\\
We now prove three theorems demonstrating that the notion of ``modal agent'' has good properties. First, we note that it makes no practical difference if we include proofs about the actions $[X(Z)]$ and $[Y_i(Z)]$ in our definition: 
\begin{theorem} \label{theorem-self-referential-modal-agents}
Suppose that $\texttt X$ is an agent, $\texttt Y_1,\dotsc,\texttt Y_n$ are modal agents of rank $<k$, and $\varphi(\tilde p,p,\tilde q_1,q_1,\dotsc,\tilde q_n,q_n)$ is a fully modalized formula such that for all agents $\texttt Z$,
\[
\mathsf{PA}\;\vdash\;[X(Z)]\;\leftrightarrow\;\varphi([X(Z)],[Z(X)],[Y_1(Z)],[Z(Y_1)],\dotsc,[Y_n(Z)],[Z(Y_n)]).
\]
Then $\texttt X$ is a rank-$k$ modal agent.
\end{theorem}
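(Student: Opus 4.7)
The goal is to massage the given equivalence into the standard form for a rank-$k$ modal agent, namely one whose right-hand side depends only on $[Z(X)]$ and on terms $[Z(W_j)]$ with $\texttt W_j$ a modal agent of rank $<k$. Two pieces of data must be cleaned up: the occurrences of $[Y_i(Z)]$ on the right, and the self-reference $[X(Z)]$ on the right.

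First I would eliminate the $[Y_i(Z)]$ by unfolding the modal-agent definition of each $\texttt Y_i$. By hypothesis each $\texttt Y_i$ has rank $<k$, so there exist a fully modalized $\varphi_i$ and modal agents $\texttt A_{i,1},\dotsc,\texttt A_{i,m_i}$ of rank strictly less than that of $\texttt Y_i$ (in particular $<k$) with
\[
\mathsf{PA}\;\vdash\;[Y_i(Z)]\;\leftrightarrow\;\varphi_i\bigl([Z(Y_i)],[Z(A_{i,1})],\dotsc,[Z(A_{i,m_i})]\bigr).
\]
Collect $\texttt Y_1,\dotsc,\texttt Y_n$ together with all $\texttt A_{i,j}$ into a single list $\texttt W_1,\dotsc,\texttt W_M$ of rank-$<k$ modal agents, and use Lemma~\ref{lemma-modal-substitution} to replace each $\tilde q_i$-slot of $\varphi$ by the right-hand side of the corresponding $\varphi_i$. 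This yields a fully modalized $\tilde\varphi(\tilde p, p, r_1,\dotsc,r_M)$ satisfying
\[
\mathsf{PA}\;\vdash\;[X(Z)]\;\leftrightarrow\;\tilde\varphi\bigl([X(Z)],[Z(X)],[Z(W_1)],\dotsc,[Z(W_M)]\bigr).
\]
Full modalization is preserved because both $\varphi$ and the $\varphi_i$ are fully modalized and the $\tilde q_i$-slots sit under boxes.

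Next I would eliminate the self-reference. Since $\tilde\varphi$ is modalized in $\tilde p$, Theorem~\ref{theorem-fpt} produces a modal formula $\psi(p,r_1,\dotsc,r_M)$ with $\mathsf{GL}\vdash \psi\leftrightarrow \tilde\varphi(\psi,p,r_1,\dotsc,r_M)$; by the ``moreover'' clause and the fact that $\tilde\varphi$ is modalized in $p$ and in each $r_j$, the resulting $\psi$ is fully modalized. Arithmetic soundness (Theorem~\ref{theorem-arithmetic-soundness}) promotes the GL equivalence, under the interpretation $p\mapsto[Z(X)]$ and $r_j\mapsto[Z(W_j)]$, to a PA equivalence. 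Both $[X(Z)]$ and $\psi([Z(X)],[Z(W_1)],\dotsc,[Z(W_M)])$ are therefore PA-fixed-points of the operator $\tilde\varphi(\,\cdot\,,[Z(X)],[Z(W_1)],\dotsc,[Z(W_M)])$, so Corollary~\ref{corollary-arithmetic-ufpt} delivers
\[
\mathsf{PA}\;\vdash\;[X(Z)]\;\leftrightarrow\;\psi\bigl([Z(X)],[Z(W_1)],\dotsc,[Z(W_M)]\bigr),
\]
which exhibits $\texttt X$ as a modal agent of rank $k$ with witnesses $\psi$ and $\texttt W_1,\dotsc,\texttt W_M$.

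The substantive content lies entirely in the two applications of the fixed-point machinery---Theorem~\ref{theorem-fpt} to construct $\psi$, and Corollary~\ref{corollary-arithmetic-ufpt} to identify $\psi([Z(X)],\dotsc)$ with $[X(Z)]$---so I expect the main obstacle to be bookkeeping: verifying that full modalization survives the substitution step (so that the ``moreover'' clause of Theorem~\ref{theorem-fpt} applies and yields a fully modalized $\psi$), and that every agent added to the list $\texttt W_1,\dotsc,\texttt W_M$ really has rank $<k$. Both reduce immediately to the definitions, with the rank bound guaranteed because each $\texttt A_{i,j}$ has rank below that of $\texttt Y_i$, which is already below $k$.
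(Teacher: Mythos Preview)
Your proposal is correct and follows essentially the same two-step strategy as the paper: first eliminate the occurrences of $[Y_i(Z)]$ by unfolding the modal-agent definitions of the $\texttt Y_i$ (using Lemma~\ref{lemma-modal-substitution}), then eliminate the self-reference $[X(Z)]$ via the modal fixed-point theorem together with uniqueness of arithmetic fixed points (Theorem~\ref{theorem-fpt}, Theorem~\ref{theorem-arithmetic-soundness}, Corollary~\ref{corollary-arithmetic-ufpt}). The only cosmetic difference is bookkeeping in the first step: the paper extends the list $\texttt Y_1,\dotsc,\texttt Y_n$ once and for all so that it is closed under the dependency relation, whereas you explicitly collect the one-level dependencies $\texttt A_{i,j}$ into a fresh list $\texttt W_1,\dotsc,\texttt W_M$; both accomplish exactly the same reduction.
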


\begin{proof}
Without loss of generality, assume that if the modal formula for $\texttt Y_i$ depends on a lower-rank modal agent $\texttt Y'$, then $\texttt Y' = \texttt Y_j$ for some $j$. Then for every~$i$, there is a fully modalized formula $\varphi_i$ such that for all $\texttt Z$, $\mathsf{PA}\vdash[Y_i(Z)]\leftrightarrow\varphi_i([Z(Y_1)],\dots,[Z(Y_n)])$. Thus, by Lemma~\ref{lemma-modal-substitution}, we may assume that $\varphi$ is of the form $\varphi([X(Z)],[Z(X)],[Z(Y_1)],\dotsc,[Z(Y_n)])$. It remains to eliminate the dependency on $[X(Z)]$.

By Theorem~\ref{theorem-fpt}, there is a fully modalized formula $\psi(p,q_1,\dotsc,q_n)$ such that $\mathsf{GL}$ proves $\psi(p,q_1,\dotsc,q_n)\leftrightarrow\varphi(\psi(p,q_1,\dotsc,q_n),\,p,\,q_1,\dotsc,q_n)$. Hence by Theorem~\ref{theorem-arithmetic-soundness}, for all agents $\texttt Z$
\begin{align*}
\mathsf{PA}\;\;\vdash\;\;
& \psi([Z(X)],[Z(Y_1)],\dotsc,[Z(Y_n)])\;\leftrightarrow\; \\
& \qquad\varphi\big(\psi([Z(X)],[Z(Y_1)],\dotsc,[Z(Y_n)]),\;[Z(X)],\;[Z(Y_1)],\dotsc,[Z(Y_n)]\big).
\end{align*}
Since $[X(Z)]$ is also a fixed point of this equation, the conclusion follows by Corollary~\ref{corollary-arithmetic-ufpt}.
\end{proof}

Next, we show that the actions two modal agents take against each other are described by a fixed point of their modal formulas, as one would expect. In particular, this shows that modal agents' actions against each other depend only on their modal formulas, not on other features of their source code.

\begin{theorem}
If $\texttt X$ and $\texttt Y$ are modal agents, define a closed modal formula $\psi_{[X(Y)]}$ by recursion on their rank, as follows. Write $\texttt X_i$ and $\texttt Y_i$ for the lower-rank agents $\texttt X$ and $\texttt Y$ depend on, respectively, and write $\varphi_X(p,q_1,\dotsc,q_m)$ and $\varphi_Y(p,q_1,\dotsc,q_n)$ for the modal formulas corresponding to $\texttt X$ and $\texttt Y$. Now let $\psi_{[X(Y)]}$ be the formula provided by Theorem~\ref{theorem-fpt} satisfying
\[
\mathsf{GL}\;\vdash\;\psi_{[X(Y)]}\;\leftrightarrow\;\varphi_X\big(\varphi_Y(\psi_{[X(Y)]},\,\psi_{[X(Y_1)]},\dotsc,\psi_{[X(Y_n)]}),\;\psi_{[Y(X_1)]},\dotsc,\psi_{[Y(X_m)]}\big).
\]
Then
\[
\mathsf{PA} \;\vdash\; [X(Y)]\;\leftrightarrow\;\psi_{[X(Y)]}.
\]
\end{theorem}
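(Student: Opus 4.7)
The plan is to prove $\mathsf{PA}\vdash[X(Y)]\leftrightarrow\psi_{[X(Y)]}$ by strong induction on $\mathrm{rank}(\texttt X)+\mathrm{rank}(\texttt Y)$. The overall strategy is to identify a fully modalized (in its first argument) operator $F(p)$ such that both $[X(Y)]$ and $\psi_{[X(Y)]}$ are PA-provable fixed points of $F$, and then invoke the uniqueness of arithmetic fixed points, Corollary~\ref{corollary-arithmetic-ufpt}. The induction is needed only to pin down the ``auxiliary parameters'' of $F$ in terms of the lower-rank modal agents $\texttt X_j$ and $\texttt Y_i$ that $\texttt X$ and $\texttt Y$ depend on.

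Unpacking definitions, I specialize the modal-agent identities to $\texttt Z=\texttt Y$ and $\texttt Z=\texttt X$ respectively, and substitute one into the first slot of the other (justified by Lemma~\ref{lemma-modal-substitution}), obtaining
\[
\mathsf{PA}\vdash[X(Y)]\leftrightarrow\varphi_X\bigl(\varphi_Y([X(Y)],\,[X(Y_1)],\dots,[X(Y_n)]),\,[Y(X_1)],\dots,[Y(X_m)]\bigr).
\]
Each pair $(\texttt X,\texttt Y_i)$ and $(\texttt Y,\texttt X_j)$ has strictly smaller rank-sum than $(\texttt X,\texttt Y)$, so the induction hypothesis supplies $\mathsf{PA}\vdash[X(Y_i)]\leftrightarrow\psi_{[X(Y_i)]}$ and $\mathsf{PA}\vdash[Y(X_j)]\leftrightarrow\psi_{[Y(X_j)]}$. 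A further application of Lemma~\ref{lemma-modal-substitution} (to $\varphi_Y$ in its $q_i$-slots and to $\varphi_X$ in its $q_j$-slots) lets me replace each $[X(Y_i)]$ and $[Y(X_j)]$ by the closed modal formulas $\psi_{[X(Y_i)]}$ and $\psi_{[Y(X_j)]}$, yielding $\mathsf{PA}\vdash[X(Y)]\leftrightarrow F([X(Y)])$ for
\[
F(p)\;:=\;\varphi_X\bigl(\varphi_Y(p,\,\psi_{[X(Y_1)]},\dots,\psi_{[X(Y_n)]}),\,\psi_{[Y(X_1)]},\dots,\psi_{[Y(X_m)]}\bigr).
\]

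It remains to observe that $F(p)$ is modalized in $p$: since $\varphi_Y$ is fully modalized, every free occurrence of $p$ in $\varphi_Y(p,\dots)$ sits inside some $\Box$; since $\varphi_X$ is fully modalized, the entire first argument of $\varphi_X$ in $F(p)$ sits inside another $\Box$; and the $\psi_{[Y(X_j)]}$ are closed, contributing no occurrences of $p$ at all. The recursive definition of $\psi_{[X(Y)]}$ is precisely $\mathsf{GL}\vdash\psi_{[X(Y)]}\leftrightarrow F(\psi_{[X(Y)]})$, which Theorem~\ref{theorem-arithmetic-soundness} promotes to a PA-equivalence. Both $[X(Y)]$ and $\psi_{[X(Y)]}$ are therefore PA-fixed points of the $p$-modalized operator $F$, and Corollary~\ref{corollary-arithmetic-ufpt} delivers $\mathsf{PA}\vdash[X(Y)]\leftrightarrow\psi_{[X(Y)]}$. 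The base case $\mathrm{rank}(\texttt X)=\mathrm{rank}(\texttt Y)=0$ is identical, with no auxiliary parameters and no appeal to the induction hypothesis.

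The main obstacle is bookkeeping rather than anything conceptually subtle: choosing a rank measure under which the recursion is well-founded (the sum suffices, since reducing to $(\texttt X,\texttt Y_i)$ or $(\texttt Y,\texttt X_j)$ strictly decreases $\mathrm{rank}(\texttt X)+\mathrm{rank}(\texttt Y)$), keeping the ``opponent'' slot of each modal formula straight through the double substitution, and confirming that plugging closed modal formulas into the $q$-slots of $\varphi_X$ and $\varphi_Y$ preserves being fully modalized in $p$ — which reduces to the observation that closed formulas introduce no free-variable occurrences outside of boxes.
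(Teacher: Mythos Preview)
Your proof is correct and follows essentially the same route as the paper: induct on rank, unfold both modal-agent definitions and substitute one into the other, replace the lower-rank terms $[X(Y_i)]$ and $[Y(X_j)]$ by their $\psi$'s via the induction hypothesis and Lemma~\ref{lemma-modal-substitution}, and then invoke Corollary~\ref{corollary-arithmetic-ufpt} to identify the two fixed points. Your write-up is in fact more careful than the paper's in two places---you make the induction measure explicit (sum of ranks, which is needed since passing to $(\texttt X,\texttt Y_i)$ need not lower $\max$), and you verify that the composite $F(p)$ is modalized in $p$---but the argument is the same.
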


\begin{proof}
By induction, we assume that this already holds for lower ranks. By the definition of modal agent, the induction hypothesis, and Lemma~\ref{lemma-modal-substitution}, $\mathsf{PA}$ shows that $[X(Y)]$ is equivalent to
\[
\varphi_X\big([Y(X)],\psi_{[Y(X_1)]},\dotsc,\psi_{[Y(X_m)]}\big)
\]
and that this is in turn equivalent to
\[
\varphi_X\big(\varphi_Y\big([X(Y)],\psi_{[X(Y_1)]},\dotsc,\psi_{[X(Y_n)]}\big),\;\psi_{[Y(X_1)]},\dotsc,\psi_{[Y(X_m)]}\big).
\]
Thus, $\mathsf{PA}$ shows that both $[X(Y)]$ and $\psi_{[X(Y)]}$ are fixed points of the same formula, and hence that $[X(Y)]\leftrightarrow\psi_{[X(Y)]}$ by Corollary~\ref{corollary-arithmetic-ufpt}.
\end{proof}

Finally, we show that modal agents' actions depend only on their opponents' \emph{behavior}, not on other features of their source code.

\begin{definition}
Two agents $\texttt X$ and $\texttt Y$ are called \emph{behaviorally equivalent} if for every agent $\texttt Z$, $\mathsf{PA}$ proves $[X(Z)]\leftrightarrow[Y(Z)]$. A \emph{behavioral agent} is an agent $\texttt X$ such that for any pair of behaviorally equivalent agents $\texttt Y$ and $\texttt Z$, $\mathsf{PA}$ proves $[X(Y)]\leftrightarrow[X(Z)]$.
\end{definition}

\begin{theorem}
\label{behavioral}
Modal agents are behavioral.
\end{theorem}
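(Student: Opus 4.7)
The plan is to unfold the modal-agent definition for $\texttt X$ applied to each of $\texttt Y$ and $\texttt Z$, obtaining two expressions that are syntactically identical except that occurrences of $[Y(\cdot)]$ are replaced by $[Z(\cdot)]$; then invoke Lemma~\ref{lemma-modal-substitution} to conclude. Concretely, since $\texttt X$ is a modal agent, fix a fully modalized formula $\varphi(p,q_1,\dotsc,q_n)$ and lower-rank modal agents $\texttt Y_1,\dotsc,\texttt Y_n$ witnessing this, so that
\[
\mathsf{PA}\vdash [X(Y)]\leftrightarrow\varphi([Y(X)],[Y(Y_1)],\dotsc,[Y(Y_n)])
\]
and
\[
\mathsf{PA}\vdash [X(Z)]\leftrightarrow\varphi([Z(X)],[Z(Y_1)],\dotsc,[Z(Y_n)]).
\]

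Next I would use behavioral equivalence of $\texttt Y$ and $\texttt Z$: taking the opponent in the definition to be $\texttt X$ and then each $\texttt Y_i$ in turn, we immediately get $\mathsf{PA}\vdash[Y(X)]\leftrightarrow[Z(X)]$ and $\mathsf{PA}\vdash[Y(Y_i)]\leftrightarrow[Z(Y_i)]$ for every $i$. Applying Lemma~\ref{lemma-modal-substitution} to the modal formula $\varphi$ with these pairs of provably equivalent substituends yields
\[
\mathsf{PA}\vdash\varphi([Y(X)],[Y(Y_1)],\dotsc,[Y(Y_n)])\leftrightarrow\varphi([Z(X)],[Z(Y_1)],\dotsc,[Z(Y_n)]).
\]
Chaining this equivalence with the two unfoldings above gives $\mathsf{PA}\vdash[X(Y)]\leftrightarrow[X(Z)]$, which is exactly the definition of $\texttt X$ being behavioral.

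There is essentially no hard step here: the real content has already been packaged into Lemma~\ref{lemma-modal-substitution} (itself a consequence of Lindstr\"om's Lemma~8, which relies on the fact that $\varphi$ is a modal formula, hence that substituted occurrences of its free variables appear only where the overall polarity is controlled by box-steps of $\mathsf{GL}$). The only potentially subtle issue is notational: making sure the ``opponent'' argument in the definition of behavioral equivalence is instantiated not only to $\texttt X$ itself but also to each of the finitely many auxiliary agents $\texttt Y_1,\dotsc,\texttt Y_n$ on which $\texttt X$ depends, so that every argument slot of $\varphi$ is covered before the lemma is applied.
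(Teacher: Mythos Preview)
Your proof is correct and follows essentially the same route as the paper's: unfold the modal-agent definition of $\texttt X$ at $\texttt Y$ and at $\texttt Z$, use behavioral equivalence to match up the arguments, and invoke Lemma~\ref{lemma-modal-substitution} to conclude. The only cosmetic difference is that the paper names the auxiliary agents $\texttt X_1,\dotsc,\texttt X_n$ rather than $\texttt Y_1,\dotsc,\texttt Y_n$, which avoids the (harmless) notational clash with the agent $\texttt Y$ already in play.
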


\begin{proof}
Suppose that $\texttt X$ is a modal agent and $\texttt Y$ and $\texttt Z$ are behaviorally equivalent. Write $\texttt X_i$ for the lower-rank agents $\texttt X$ depends on, and $\varphi(p,q_1,\dotsc,q_n)$ for the modal formula corresponding to $\texttt X$. Then
\[
\mathsf{PA}\;\vdash\;[X(Y)]\;\leftrightarrow\;\varphi\big([Y(X)],[Y(X_1)],\dotsc,[Y(X_n)]\big)
\]
and
\[
\mathsf{PA}\;\vdash\;[X(Z)]\;\leftrightarrow\;\varphi\big([Z(X)],[Z(X_1)],\dotsc,[Z(X_n)]\big).
\]
But by the behavioral equivalence of $\texttt Y$ and $\texttt Z$ together with Lemma~\ref{lemma-modal-substitution}, the right-hand sides are equivalent, so $\mathsf{PA}\vdash[X(Y)]\leftrightarrow[X(Z)]$ as desired.
\end{proof}

\begin{corollary}
$\texttt{CliqueBot}$ is not a modal agent.
\end{corollary}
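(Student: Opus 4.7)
By Theorem~\ref{behavioral}, every modal agent is behavioral, so it suffices to produce two behaviorally equivalent agents $\texttt Y$ and $\texttt Z$ against which \texttt{CliqueBot} gives different responses. The natural choice is $\texttt Y = \texttt{CliqueBot}$ itself and $\texttt Z$ some agent that computes the same function as \texttt{CliqueBot} but has a different G\"odel number.

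Concretely, let $g$ be the G\"odel number of \texttt{CliqueBot} (fixed in advance by the diagonal-lemma construction used to define it). Define $\texttt Z$ to be the formula asserting ``the input equals $g$ and $0=0$,'' i.e.\ a syntactic variant of \texttt{CliqueBot} in which a tautological conjunct has been inserted. Then $\texttt Z$ is a closed term of $\mathsf{PA}$ distinct from \texttt{CliqueBot}, so the G\"odel number of $\texttt Z$ is not $g$. For every agent $\texttt W$, $\mathsf{PA}$ can decide the atomic comparison of G\"odel numbers, and hence proves $[Z(W)]\leftrightarrow[\texttt{CliqueBot}(W)]$: both sides reduce to the statement that the G\"odel number of $\texttt W$ equals $g$. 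Thus $\texttt Z$ and \texttt{CliqueBot} are behaviorally equivalent.

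On the other hand, $\mathsf{PA}\vdash[\texttt{CliqueBot}(\texttt{CliqueBot})]$ (since the G\"odel number of \texttt{CliqueBot} is trivially equal to $g$), while $\mathsf{PA}\vdash\neg[\texttt{CliqueBot}(\texttt Z)]$ (since the G\"odel number of $\texttt Z$ differs from $g$). This violates the behavioral property, so \texttt{CliqueBot} is not behavioral, and therefore, by the contrapositive of Theorem~\ref{behavioral}, not a modal agent.

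There is no real obstacle here: the whole argument rests on the elementary observation that the same computable function admits distinct syntactic implementations, which \texttt{CliqueBot}'s definition explicitly distinguishes. The only thing to be careful about is ensuring that $\texttt Z$ is genuinely a different formula (not merely a different presentation of the same G\"odel code), which is why the inserted conjunct is written at the syntactic level of the underlying $\mathsf{PA}$ formula rather than simplified away.
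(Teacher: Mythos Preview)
Your proof is correct and follows essentially the same approach as the paper: both invoke Theorem~\ref{behavioral} and exhibit a syntactically different but logically (hence behaviorally) equivalent variant of \texttt{CliqueBot} to show that \texttt{CliqueBot} is not behavioral. The paper's proof is simply a terser version of yours; your only slip is calling $\texttt Z$ a ``closed term'' when you mean a formula with one free variable, but this does not affect the argument.
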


\begin{proof}
$\texttt{CliqueBot}$ cooperates with itself, but not with a syntactically different but logically (and therefore behaviorally) equivalent variant. Hence, $\texttt{CliqueBot}$ is not a behavioral agent, and by Theorem \ref{behavioral} it is not a modal agent.
\end{proof}

It feels a bit clunky, in some sense, for the definition of modal agents to include references to other, simpler modal agents. Could we not do just as well with a carefully constructed agent that makes no such outside calls (i.e. a modal agent of rank 0)?
\\
\\ Surprisingly, the answer is no: there is no modal agent of rank 0 that achieves mutual cooperation with \texttt{FairBot} and defects against \texttt{CooperateBot}. In particular:
\begin{theorem}
\label{thirdparties}
Any modal agent \texttt{X} of rank 0 such that $\mathsf{PA}\vdash[X(FB)=C]$ must also have $\mathsf{PA}\vdash[X(CB)=C]$.
\end{theorem}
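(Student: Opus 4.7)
The plan is to unfold both $[X(FB)]$ and $[X(CB)]$ via the single modal formula defining $\texttt{X}$, and then to observe that the hypothesis collapses the argument on the \texttt{FairBot} side to a tautology, matching the \texttt{CooperateBot} side exactly. Since $\texttt{X}$ has rank $0$, there are no lower-rank agents in its definition, so the defining formula is a one-variable fully modalized $\varphi(p)$ with $\mathsf{PA}\vdash[X(Z)]\leftrightarrow\varphi([Z(X)])$ for every agent $\texttt{Z}$. Specializing to $\texttt{Z}=\texttt{FB}$ and $\texttt{Z}=\texttt{CB}$ and using $\mathsf{PA}\vdash[FB(X)]\leftrightarrow\Box[X(FB)]$ together with $\mathsf{PA}\vdash[CB(X)]$ (applying Lemma~\ref{lemma-modal-substitution} once on each side to substitute under $\varphi$) yields
\[
\mathsf{PA}\vdash[X(FB)]\leftrightarrow\varphi(\Box[X(FB)]),\qquad \mathsf{PA}\vdash[X(CB)]\leftrightarrow\varphi(\top).
\]

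The next step is to feed in the hypothesis. From $\mathsf{PA}\vdash[X(FB)]$, internal necessitation (the standard derivability condition on the provability predicate) gives $\mathsf{PA}\vdash\Box[X(FB)]$, hence $\mathsf{PA}\vdash\Box[X(FB)]\leftrightarrow\top$. Applying Lemma~\ref{lemma-modal-substitution} to $\varphi(p)$ with these two $\mathsf{PA}$-equivalent substitutions produces $\mathsf{PA}\vdash\varphi(\Box[X(FB)])\leftrightarrow\varphi(\top)$. Chaining with the two equivalences above gives $\mathsf{PA}\vdash[X(FB)]\leftrightarrow[X(CB)]$, and the hypothesis then delivers $\mathsf{PA}\vdash[X(CB)]$.

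I do not anticipate a real obstacle; the only moment of care is invoking internal necessitation on $[X(FB)]$ to conclude $\mathsf{PA}\vdash\Box[X(FB)]$, which is precisely the derivability condition underwriting L\"ob's Theorem. Conceptually, the theorem expresses that a rank-$0$ agent has only one lever on \texttt{FairBot}, namely the box $\Box[X(FB)]$, and once the hypothesis forces this box to be $\mathsf{PA}$-provable it becomes $\mathsf{PA}$-equivalent to the $\top$ that $\varphi$ already sees when confronting \texttt{CooperateBot}. This is exactly what makes the third-party reference to \texttt{DefectBot} in \texttt{PrudentBot} essential for breaking past the limitation.
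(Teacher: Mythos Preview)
Your proof is correct and follows essentially the same route as the paper's: unfold $[X(FB)]$ and $[X(CB)]$ via the rank-$0$ formula $\varphi$, use the definitions of \texttt{FB} and \texttt{CB} together with Lemma~\ref{lemma-modal-substitution} to reduce to $\varphi(\Box[X(FB)])$ and $\varphi(\top)$, apply necessitation to the hypothesis to make $\Box[X(FB)]$ provably equivalent to $\top$, and invoke Lemma~\ref{lemma-modal-substitution} again to conclude. The only difference is that you spell out more explicitly why rank $0$ forces $\varphi$ to be one-variable, which the paper leaves implicit.
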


\begin{proof}
Writing $\varphi(\cdot)$ for the modal formula defining \texttt{X}, by Lemma~\ref{lemma-modal-substitution} we see that \textsf{PA} proves $[X(CB) = C]\leftrightarrow\varphi(\top)$ and $[X(FB) = C]\leftrightarrow\varphi(\square[X(FB) = C])$; since by assumption it also proves $[X(FB) = C]$, we have $\mathsf{PA}\vdash\square[X(FB) = C]$ and hence $\mathsf{PA}\vdash\square[X(FB) = C]\leftrightarrow\top$, so with another application of Lemma~\ref{lemma-modal-substitution} we obtain $\mathsf{PA}\vdash\varphi(\square[X(FB) = C])\leftrightarrow\varphi(\top)$, whence $\mathsf{PA}\vdash[X(FB) = C]\leftrightarrow[X(CB) = C]$ and finally $\mathsf{PA}\vdash[X(CB) = C]$.
\end{proof}

\section{Obstacles to Optimality}
\label{optimality}

It is worthwhile to ask whether there is some meaningful sense of ``optimality'' for logical agents or modal agents in particular. For many natural definitions of optimality, this is impossible. For instance, there is no \texttt{X} such that for all $\texttt{Y}$, the utility achieved by \texttt{X} against $\texttt{Y}$ is the highest achieved by any \texttt{Z} against $\texttt{Y}$. (To see this, consider $\texttt{Y}$ defined so that $Y(Z)=C$ if and only if \texttt{Z}$\neq$\texttt{X}.) More generally, an agent can ``punish'' or ``reward'' other agents for any arbitrary feature of their code.

Could we hope that \texttt{PrudentBot} might at least be optimal among modal agents in some meaningful sense? As it happens, there are several impediments to optimality among modal agents, which together make it very difficult to formulate any nontrivial and non-vacuous definition of optimality.

Most directly, for any modal agents $\texttt{X}$ and $\texttt{Y}$, either their outputs are identical on all modal agents, or there exists a modal agent \texttt{Z} which cooperates against $\texttt{X}$ but defects against $\texttt{Y}$. (For an enlightening example, consider the agent \texttt{TrollBot} which cooperates with \texttt{X} if and only if $\textsf{PA}\vdash X(DB)=C$.) Thus any nontrivial and nonvacuous concept of optimality must be weaker than total dominance, and in particular it must accept that third parties could seek to ``punish'' an agent for succeeding in a particular matchup.\\
\\
Another issue is illustrated by the following agent:

\begin{algorithm}[H]
\label{JustBot}
 \SetKwInOut{Input}{Input}\SetKwInOut{Output}{Output}

 \Input{Source code of the agent \texttt{X}}
 \Output{$C$ or $D$}

 \eIf {\emph{\textsf{PA}} $\vdash$ X(FairBot)$=C$}{
  return $C$\;
  }{
  return $D$\;
  }
 \caption{\texttt{JustBot (JB)}}
\end{algorithm}
That is, \texttt{JustBot} cooperates with \texttt{X} if and only if \texttt{X} cooperates with \texttt{FairBot}. (Note that \texttt{JustBot} has different source code from \texttt{FairBot}; in particular, it can use a hard-cooded reference to \texttt{FairBot}'s code, where \texttt{FairBot} must use a quine.) Clearly, \texttt{JustBot} is exploitable by some algorithm (in particular, consider the non-modal algorithm which cooperates only with the corresponding \texttt{FairBot} and with nothing else), but since it is behaviorally equivalent to \texttt{FairBot}, by Theorem \ref{behavioral} it cannot be exploited by any modal agent.
\\
\\ Thirdly, agents can ``run out of deductive power'' and fail to make the right choices against opponents that use much higher formal systems in certain ways. Explicitly, consider the family of modal agents $\texttt{WaitFairBot}_K$, defined by $$[WaitFairBot_K(X)=C]\leftrightarrow ((\neg\Box^K\bot)\wedge \Box(\neg\Box^K\bot\to [X(WaitFairBot_K)=C])).$$ These are simply versions of FairBot operating in stronger formal systems, so we would like our modal agents to find mutual cooperation with them. As it turns out\footnote{We originally tried to include a proof of this claim, but it ballooned this section out of proportion. The interested reader should start by proving the folk theorem that Kripke frames for \textbf{GL} correspond to ordinal chains, and then show that for any modal agent \texttt{X} which defects against \texttt{DefectBot}, there is a number $n$ such that $X(DB)=D$ holds in every world above height $n$ within a Kripke frame, and then induct on agents and subformulas of agents to show that $X(WFB_K)=D$ for any $K>n$.}, any modal agent which defects against \texttt{DefectBot} will fail to achieve mutual cooperation with $\texttt{WaitFairBot}_K$ for all sufficiently large $K$.
\\
\\ Despite these reasons for pessimism, we have not actually ruled out the existence of a nontrivial and non-vacuous optimality criterion which corresponds to our philosophical intuitions about ``correct'' decisions. Additionally, there are a number of ways to depart only mildly from the modal framework (such as allowing quantifiers over agents), and these could invalidate some of the above obstacles.

\section{Philosophical Digressions}
\label{PhilosophicalConclusions}

One might ask (on a philosophical level) why we object to FairBot in the first place; isn't it a feature, not a bug, that this agent offers up cooperation even to agents that blindly trust it? We suggest that it is too tempting to anthropomorphize agents in this context, and that many problems which can be interpreted as playing a Prisoner's Dilemma against a CooperateBot are situations in which one would not hesitate to ``defect'' in real life without qualms.\\
\\ 
For instance, consider the following situation: You've come down with the common cold, and must decide whether to go out in public. If it were up to you, you'd stay at home and not infect anyone else. But it occurs to you that the cold virus has a ``choice'' as well: it could mutate and make you so sick that you'd have to go to the hospital, where it would have a small chance of causing a full outbreak! Fortunately, you know that cold viruses are highly unlikely to do this.\footnote{Incidentally, the reason that cold viruses do not pursue this strategy is that, throughout human history and most of the world today, making the host sicker gives the virus fewer, not more, chances to spread.} If you map out the payoffs, however, you find that you are in a Prisoner's Dilemma with the cold virus, and that it plays the part of a CooperateBot. Are you therefore inclined to ``cooperate'' and infect your friends in order to repay the cold virus for not making you sicker?\\
\\
The example is artificial and obviously facetious, but not entirely spurious. The world does not come with conveniently labeled ``agents''; entities on scales at least from viruses to governments show signs of goal-directed behavior. Given a sufficiently broad counterfactual, almost any of these could be defined as a CooperateBot on a suitable Prisoner's Dilemma. And most of us feel no compunction about optimizing our human lives without worrying about the flourishing of cold viruses.\footnote{Note that it would, in fact, be different if a virus were intelligent enough to predict the macroscopic behavior of their host and base their mutations on that! In such a case, one might well negotiate with the virus. Alternatively, if one's concern for the well-being of viruses reached a comparable level to one's concern for healthy friends, that would change the payoff matrix so that it was no longer a Prisoner's Dilemma. But both of these considerations are far more applicable to human beings than to viruses.}\\
\\
In a certain sense, PrudentBot is actually ``good enough'' among modal agents that one might expect to encounter: there are bound to be agents (CooperateBot and DefectBot) whose action fails to depend in any sense upon predictions of other agents' behavior, and other agents (FairBot, PrudentBot, etc) whose action depends meaningfully on such predictions. One should not expect to encounter a TrollBot or JustBot arising naturally! But it's worth pondering if this reasoning can be made formal in any elegant way.\\
\\
Do these results imply that sufficiently intelligent and rational agents will reach mutual cooperation in one-shot Prisoner's Dilemmas? In a word, no, not yet.\footnote{However, to borrow what Randall Munroe said about correlation and causation, this form of program equilibrium does waggle its eyebrows suggestively and gesture furtively (toward cooperation in the Prisoner's Dilemma) while mouthing `look over there'.} Many things about this setup are notably artificial, most prominently the perfectly reliable exchange of source code (and after that, the intractably long computations that might perhaps be necessary for even the finitary versions).\\
\\Nor does this have direct implications among human beings; our abilities to read each other psychologically, while occasionally quite impressive, bear only the slightest analogy to the extremely artificial setup of modal agents. Governments and corporations may be closer analogues to our agents (and indeed, game theory has been applied much more successfully on that scale than on the human scale), but the authors would not consider the application of these results to such organizations to be straightforward, either. The theorems herein are not a demonstration that a more advanced approach to decision theory (i.e. one which does not fail on what we consider to be common-sense problems) is practical, only a demonstration that it is possible.

\section{Open Problems}
\label{OpenProblems}
In particular, here are some open problems we have come across in this area:
\begin{itemize}
\item Is there a natural, nonvacuous, and nontrivial definition of optimality among modal agents?
\item Are there tractable ways of studying agents which can incorporate quantifiers as well as modal operators? For example, we might consider the non-modal agent \texttt{X} such that \texttt{X} cooperates with \texttt{Y} iff some formal system proves both that \texttt{Y} is unexploitable (given the consistency of some other formal system) and that \texttt{Y} cooperates with \texttt{X}.
\item What different dynamics arise when we consider the analogues of modal agents in more complicated games than the Prisoner's Dilemma? In particular, there are issues raised by games with more than one ``superrational equilibrium'', like the Coordination Game. The natural analogues of FairBot and PrudentBot transform any finite game between themselves into a coordination or bargaining game, but do not provide insight on how to resolve those sorts of conflicts.
\item What happens if we apply probabilistic reasoning rather than provability logic? As this allows for mixed strategies, it introduces all of the complexities of bargaining games, as well as new ones. 
\item What differs in games with more than two players; in particular, what might coordination and bargaining among coalitions look like? It is easier to imagine two agents with each others' source code agreeing to cooperate in a Prisoner's Dilemma than it is to imagine three agents with each others' source code agreeing on how to subdivide a fixed prize (which they lose if they do not have a majority agreeing on an acceptable split).
\end{itemize}

\section*{Acknowledgments}
This project was developed at a research workshop held by the Machine Intelligence Research Institute (MIRI) in Berkeley, California, in April 2013. The authors gratefully acknowledge MIRI's hospitality and support throughout the workshop.
\\
\\Patrick LaVictoire was partially supported by NSF Grant DMS-1201314 while working on this project.
\\
\\Thanks to everyone who has commented on various partial results and drafts, in particular Alex Altair, Stuart Armstrong, Andrew Critch, Wei Dai, Daniel Dewey, Gary Drescher, Kenny Easwaran, Cameron Freer, Bill Hibbard, Vladimir Nesov, Vladimir Slepnev, Jacob Steinhardt, Nisan Stiennon, Jessica Taylor, and Qiaochu Yuan. Further thanks to readers on the blog LessWrong for their comments on a preprint of this article.

\bibliography{DT}{}
\bibliographystyle{plain}

\end{document}